\theoremstyle{plain}
\newtheorem{theorem}{Theorem}
\newtheorem{lemma}{Lemma}
\theoremstyle{definition}
\newtheorem{problem}{Problem}
\newtheorem{definition}{Definition}
\newcommand{\abs}[1]{\left| #1 \right|}
\newcommand{\program}{F} 
\newcommand{\inputs}{I} 
\newcommand{\branchpoints}{B} 
\newcommand{\wt}[1]{\operatorname{wt} (#1)} 
\newcommand{\branches}[1]{\operatorname{B} (#1)} 
\newcommand{\nonepath}{B_{\mathrm{none}}} 
\newcommand{\nonesupport}{S_{\mathrm{none}}} 
\newcommand{\basispath}[1]{B_{#1}} 
\newcommand{\condition}[1]{C_{#1}} 
\newcommand{\support}[1]{S_{#1}} 
\newcommand{\truecount}[1]{T_{#1}} 
\newcommand{\comment}[1]{}
\begin{document}
\title{Speeding Up SMT-Based\\ Quantitative Program Analysis}
\titlerunning{Speeding Up SMT-Based Quantitative Program Analysis}
\author{Daniel J. Fremont \and Sanjit A. Seshia}
\institute{University of California, Berkeley\\\email{dfremont@berkeley.edu\\sseshia@eecs.berkeley.edu}}
\authorrunning{Fremont and Seshia}
\maketitle

\begin{abstract}
Quantitative program analysis involves computing numerical quantities
about individual or collections of program executions. An example of
such a computation is quantitative information flow analysis, where
one estimates the amount of information leaked about secret data
through a program's output channels. Such information can be quantified 
in several ways, including channel capacity and (Shannon) entropy. In
this paper, we formalize a class of quantitative analysis problems
defined over a weighted control flow graph of a loop-free program. These
problems can be solved using a combination of path enumeration, SMT solving, and model counting. However, existing methods can
only handle very small programs, primarily because the number of
execution paths can be exponential in the program size. We show how
path explosion can be mitigated in some practical cases by taking
advantage of special branching structure and by novel algorithm
design. We demonstrate our techniques by computing the channel capacities of the  
timing side-channels of two programs with extremely large numbers of
paths.
\end{abstract}


\section{Introduction} 
\label{sec:intro}

Quantitative program analysis involves computing numerical quantities
that are functions of individual or collections of program
executions. Examples of such problems include computing worst-case or
average-case execution time of programs, and
quantitative information flow, which seeks to compute the amount of
information leaked by a program. Much of the work in this area has
focused on extremal quantitative analysis problems --- that is,
problems of finding worst-case (or best-case) bounds on
quantities. However, several problems involve not just finding
extremal bounds but computing functions over multiple (or all)
executions of a program. One such example, in the general area of
quantitative information flow, is to estimate the entropy or channel
capacity of a program's output channel. These quantitative analysis
problems are computationally more challenging, since the number of
executions (for terminating programs) can be very large, possibly
exponentially many in the program size. 

In this paper, we present a formalization and satisfiability modulo
theories (SMT) based solution to a family of quantitative 
analysis questions for deterministic, terminating programs. 
The formalization is covered in detail in
Section~\ref{sec:model}, but we present some basic intuition here.
This family of problems can be defined over
a weighted graph-based model of the program. 
More specifically, considering the
program's control flow graph, one can ascribe weights to nodes or
edges of the graph capturing the quantity of interest (execution time,
number of bits leaked, memory used, etc.) for basic blocks. Then, to obtain the
quantitative measure for a given program path, one sums up the
weights along that path. Furthermore, in order to count the number of
program inputs (and thus executions) corresponding to a program path, one can perform model
counting on the formula encoding the path condition. Finally, to
compute the quantity of interest (such as entropy or channel capacity)
for the overall program, one combines the quantities and model counts
obtained for all program paths using a prescribed formula.

The obvious limitation of the basic approach sketched above is that,
for programs with substantial branching structure, the number of
program paths (and thus, executions) can be exponential in the program size.
We address this problem in the present paper with two ideas.
First, we show how a certain type of ``confluent'' 
branching structure which often occurs in real programs can be
exploited to gain significant performance enhancements. A common
example of this branching structure is the presence of a conditional
statement inside a for-loop, which leads to $2^N$ paths for $N$ loop
iterations. 
In this case, if the branches are proved to be ``independent'' of each
other (by invoking an SMT solver), then one can perform model
counting of individual branch conditions rather than of entire path
conditions, and then cheaply aggregate those model counts. 
Secondly, to compute a quantity such as channel capacity, it is not necessary
to derive the entire distribution of values over all paths.
For this case, we give an efficient algorithm to compute all the values attained by a given quantity (e.g. execution time) over all possible paths ---
i.e., the support of the distribution --- which runs in time polynomial in the sizes of the program and the support.
Our algorithmic methods are particularly tuned to the
analysis of timing side-channels in programs. 
Specifically, we apply our ideas to computing
the channel capacity of timing side-channels for two standard programs which
have far too many paths for previous techniques to handle. 

Our techniques enable the use of SMT methods in a new application, namely quantitative program analyses such as assessing the feasibility of side-channel attacks. While SMT methods are used in other program verification problems with exponentially-large search spaces, na\"ive attempts to use them to compute statistics  like those we consider do not circumvent path explosion. The optimizations that form our primary contributions are essential in making feasible the application of SMT to our domain.

To summarize, the main contributions of this paper include:
\begin{itemize}
\item a method for utilizing special branching structure to reduce the number of model counter invocations needed to compute the distribution of a class of quantitative measures from potentially exponential to linear in the size of the program, and
\item an algorithm which exploits this structure to compute the support of such distributions in time polynomial in the size of the program and the support.
\end{itemize}

The rest of the paper is organized as follows. We present background
material and problem definitions in Sec.~\ref{sec:model}. Algorithms
and theoretical results are presented in Sec.~\ref{sec:algos-theory}.
Experimental results are given in Sec.~\ref{sec:expts} and we conclude
in Sec.~\ref{sec:concl}.

\comment{
consumptions, runtimes, memory requirements, etc. as a result of
allowing execution to proceed along paths with different
statements. Understanding these variations is important in a number of
domains, such as performance characterization or assessment of
vulnerability to side-channel attacks. As an example of the latter,
there has been work in quantitative information flow (QIF) on
automatically finding bounds on how much information an adversary can
learn by observing the overall runtime of a program, where the timing
variation is due either to cache accesses \cite{cacheaudit} or control
flow \cite{KB}. In the case of control flow, these techniques
ultimately require enumerating all execution paths of the
program. Since the number of paths can be exponential in the number of
branch points, these methods are only practical for programs with very
few conditionals. 

In this paper, we propose two abstract problems which subsume these
types of analyses of resource consumption variation due to control
flow. In general path explosion makes the solving of these problems
difficult, but we 
}

\section{Background and Problem Definition}
\label{section-preliminaries}
\label{sec:model}

We present some background material in Sec.~\ref{sec:prelim} and the
formal problem definitions in Sec.~\ref{sec:probdef}.

\subsection{Preliminaries}
\label{sec:prelim}

We assume throughout that we are given a loop-free deterministic program
$\program$ whose input is a set of bits $\inputs$. Our running example for $\program$ will be the standard algorithm for modular exponentiation by repeated squaring, denoted \texttt{modexp}, where the base and modulus are fixed and the input is the exponent. Usually \texttt{modexp} is written with a loop that iterates once for each bit of the exponent. To make \texttt{modexp} loop-free we unroll its loop, yielding for a 2-bit exponent the program shown on the left of Figure \ref{figure-modexp}. Lines \ref{line-modexp-conditional1}--\ref{line-modexp-endloop1} and \ref{line-modexp-conditional2}--\ref{line-modexp-endloop2} correspond to the two iterations of the loop.
\begin{figure}
\centering
\begin{minipage}{1.75in}
\begin{algorithmic}[1]
\STATE $r \leftarrow 1$
\IF{($e$ \& $1$) = $1$} \label{line-modexp-conditional1}
	\STATE $r \leftarrow r b \pmod{m}$
\ENDIF
\STATE $e \leftarrow e >> 1$
\STATE $b \leftarrow b^2 \pmod{m}$ \label{line-modexp-endloop1}
\IF{($e$ \& $1$) = $1$} \label{line-modexp-conditional2}
	\STATE $r \leftarrow r b \pmod{m}$
\ENDIF
\STATE $e \leftarrow e >> 1$
\STATE $b \leftarrow b^2 \pmod{m}$ \label{line-modexp-endloop2}
\RETURN $r$
\end{algorithmic}
\end{minipage}
\begin{minipage}{3in}
\includegraphics[width=3in]{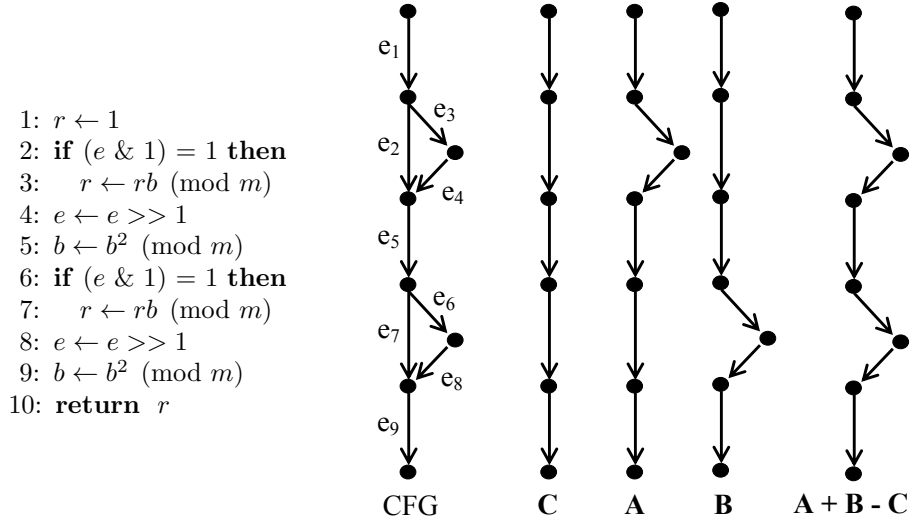}
\end{minipage}
\caption{Unrolled pseudocode and CFG for \texttt{modexp}, computing $b^e \pmod{m}$ for a 2-bit exponent $e$. Paths \textbf{A}, \textbf{B}, and \textbf{C} form a basis, the remaining (rightmost) path being a linear combination of them.}
\label{figure-modexp}
\end{figure}

To describe the execution paths of $\program$ we use the formalism introduced by
McCabe~\cite{basis-paths}. Consider the control-flow graph (CFG) of 
$\program$, where there is a vertex for each basic block, conditionals
having two outgoing edges. For example, since 2-bit \texttt{modexp} has two conditionals, its CFG (shown in Figure \ref{figure-modexp}) has two vertices with outdegree 2. We call such vertices \emph{branch points}, and denote the set of them by $\branchpoints$. Which edge out of a branch point $b \in B$ is taken depends on the truth of its \emph{branch condition} $\condition{b}$, the condition in the corresponding conditional statement. In Figure \ref{figure-modexp}, the branch condition for the first branch point is $(e \& 1) = 1$: if this holds, then edge $e_3$ is taken, and otherwise edge $e_2$ is taken. We model the finite-precision semantics of programs, variables being represented as bitvectors, so that the branch conditions can be expressed as bitvector SMT formulae. Since these conditions can depend on the result of prior computations (e.g. the second branch condition in Figure \ref{figure-modexp}), the corresponding SMT formulae include constraints encoding how those computations proceed. Then each formula uniquely determines the truth of its branch condition given an assignment to the input bits. When necessary, these formulae can be bit-blasted into propositional SAT formulae for further analysis (e.g. model counting).

For convenience we add a dummy vertex to the CFG which
has an incoming edge from all sink vertices. Since $\program$ is
loop-free the CFG is a DAG, and each execution of $\program$
corresponds to a simple path from the source to the (now unique)
sink. Given such a path $P$, we write $\branches{P}$ for the set of branch points where $P$ takes the
right of the two outgoing edges, corresponding to making $\condition{b}$ true. If there are $N$ edges then these paths can be viewed as vectors
in $\{0,1\}^N$, where each coordinate specifies whether the
corresponding edge is taken. For example, in Figure \ref{figure-modexp} path \textbf{A} corresponds to the vector $(1,0,1,1,1,1,0,0,1)$ under the given edge labeling. This representation allows us to speak meaningfully about linear combinations of paths, as long as the result is in $\{0,1\}^N$. A \emph{basis} of the set of paths is defined by analogy to vector spaces to be a minimal set of paths from which all paths can be obtained by taking linear combinations. In Figure \ref{figure-modexp}, the paths \textbf{A}, \textbf{B}, and \textbf{C} form a basis, as the only other path through the CFG can be expressed as $\mathbf{A} + \mathbf{B} - \mathbf{C}$.

Now suppose we are given an integer weight for each basic block of
$\program$, or equivalently for each vertex of its CFG.\footnote{Note that our formalism and approach can be made
to work with rational weights, but we focus here on applications for
which integer weights suffice.}
We define the \emph{total weight} $\wt{P}$ of an execution
path $P$ of $\program$ to be the sum of the weights of all basic
blocks along $P$. Note that we get the same value if the weight of
each vertex is moved to all of its outgoing edges (obviously excluding
the dummy sink), and we sum edge instead of vertex weights --- thus $\wt{\cdot}$ is a linear function.
Since $\program$ is deterministic, each input $x \in \{0,1\}^{\inputs}$ triggers a unique
execution path we denote $\path{x}$, and so has a well-defined total
weight $\wt{x} = \wt{\path{x}}$.

\subsection{Problem Definition}
\label{sec:probdef}

We consider in this paper the following problems:
\begin{problem} \label{problem-distribution}
Picking $x \in \{0,1\}^{\inputs}$ uniformly at random, what
is the distribution of $\wt{x}$? 
\end{problem}
and the special case:
\begin{problem} \label{problem-values}
What is the support of the distribution of $\wt{x}$, i.e. what is the set $\wt{\{0,1\}^{\inputs}} = \{ \wt{x} \: | \: x \in \{0,1\}^{\inputs} \}$?
\end{problem}
One way to think about these problems is to view the weight of a basic
block as some quantity or resource, say execution time or energy, that the block
consumes when executed. Then Problem \ref{problem-distribution} is to
find the distribution of the total execution time or energy
consumption of the program.  

Computing or estimating this distribution is useful in a range of applications
(see~\cite{seshia-acmtecs12}). 
We consider here a quantitative information flow (QIF) setting,
with an adversary who tries to recover $x$ from $\wt{x}$. In the
example above, this would be a timing side-channel attack scenario
where the adversary can only observe the total execution time of the
program. Given the distribution of $\wt{x}$, we can compute any of
the standard QIF metrics such as {\em channel capacity} or {\em
Shannon entropy} measuring how much information is leaked
about $x$. For deterministic programs, the
channel capacity\footnote{Sometimes called the
\emph{conditional min-entropy} of $x$ with respect to $\wt{x}$, since
for deterministic programs with a uniform input distribution they are
the same \cite{qif-foundations}.} is simply the (base 2) logarithm of
the number of possible observed values~\cite{qif-foundations}. Thus to
compute the channel capacity we do not need to know the full
distribution of $\wt{x}$, but only how many distinct values it can
take --- hence our isolation of Problem~\ref{problem-values}. As we
will see, this special case can sometimes be solved much more rapidly
than by computing the full distribution. 

We note that the general problems above can be applied to a variety of
different types of resources. On platforms where the execution time of
a basic block is constant (i.e. not dependent on the state of the
machine), they can be applied to timing analysis. The weights could
also represent the size of memory allocations, or the number of writes
to a stream or device. For all of these, solving Problems
\ref{problem-distribution} and \ref{problem-values} could be useful
for performance characterization and analysis of side-channel
attacks. 


\section{Algorithms and Theoretical Results}
\label{sec:algos-theory}

The simplest approach to Problem \ref{problem-distribution} would be
to execute program $\program$ on every $x \in \{0,1\}^{\inputs}$, computing
the total weight of the triggered path and eventually obtaining the
entire map $x \mapsto \wt{x}$. This is obviously impractical when
there are more than a few input bits, and is wasteful because often
many inputs trigger the same execution path. A more refined approach
is to enumerate all execution paths, and for each path compute how
many inputs trigger it. This can be done by expressing the branch
conditions corresponding to the path as a bitvector or propositional formula and
applying a \emph{model counter} \cite{gomes-modelcountbookch2009} (this idea was used in \cite{BKR} to count how many inputs led to a given output, although with a linear integer
arithmetic model counter). If the number of paths is much less than
$2^{\abs{\inputs}}$, as is often the case, this approach can be
significantly more efficient than brute-force input
enumeration. However, as noted above the number of paths can be
exponential in the size of $\program$, in which case this approach
requires exponentially-many calls to the model counter and therefore is also
impractical. 

A prototypical example of path explosion is our running example \texttt{modexp}. For
an $N$-bit exponent, there are $N$ conditionals, and all possible
combinations of these branches can be taken, so that there are $2^N$
execution paths. This makes model counting each path infeasible, but
observe that the algorithm's branching structure has two special
properties. First, the conditionals are \emph{unnested}: the two paths
leading from each conditional always converge prior to the next
one. Second, the branch conditions are \emph{independent}: they depend
on different bits of the input. Below we show how we can use these
properties to gain greater efficiency, yielding Algorithms
\ref{algorithm-distribution} and \ref{algorithm-values} for Problems
\ref{problem-distribution} and \ref{problem-values} respectively. 

\subsection{Unnested Conditionals}
\label{sec:unnested}

If $\program$ has no nested conditionals, its CFG has an
``$N$-diamond'' form like that shown in Figure \ref{figure-modexp}
(the number of basic blocks within and between the ``diamonds'' can
vary, of course --- in particular, we do not assume that the ``else'' branch of a conditional is empty, as is the case for \texttt{modexp}). This type of structure naturally arises when
unrolling a loop with a conditional in the body, as indeed is the case
for \texttt{modexp}. Verifying that there are no nested conditionals is
a simple matter of traversing the CFG. 

With unnested conditionals, there is a one-to-one correspondence
between execution paths and subsets of $\branchpoints$, given by $P
\mapsto \branches{P}$. For any $b \in \branchpoints$, we write
$\basispath{b}$ for the path which takes the left edge at every branch
point except $b$ (i.e. makes every branch condition false except for
that of $b$ --- of course it is possible that no input triggers this
path). We write $\nonepath$ for the path which always takes the left edge at each branch point. For example, in Figure \ref{figure-modexp} if the conditionals on lines \ref{line-modexp-conditional1} and \ref{line-modexp-conditional2} correspond to branch points $a$ and $b$ respectively, then $\mathbf{A} = \basispath{a}$, $\mathbf{B} = \basispath{b}$, and $\mathbf{C} = \nonepath$. In general, $\nonepath$ together with the paths $\basispath{b}$ form a basis for the set of all
paths. In fact, for any path $P$ it is easy to see that 
\begin{equation}
\label{equation-path-rep}
P = \left( \sum_{c \in \branches{P}} \basispath{c} \right) - \left( \abs{\branches{P}} - 1 \right) \nonepath \enspace.
\end{equation}
This representation of paths will be useful momentarily.

\subsection{Independence}
\label{sec:independence}

Recall that an input variable of a Boolean function is a \emph{support
variable} if the function actually depends on it, i.e.~the two
cofactors of the function with respect to the variable are not
equivalent. For each branch point $b \in \branchpoints$, let
$\support{b} \subseteq I$ be the set of input bits which are support variables of
$\condition{b}$. We make the following definition: 
\begin{definition}
Two conditionals $b, c \in \branchpoints$ are \emph{independent} if $\support{b} \cap \support{c} = \emptyset$.
\end{definition}
Independence simply means that there are no common support variables,
so that the truth of one condition can be set independently of the
truth of the other. 

To compute the supports of the branch conditions and check
independence, the simplest method is to iterate through all the input
bits, checking for each one whether the cofactors of the branch
condition with respect to it are inequivalent using an SMT query in
the usual way. This can be substantially streamlined by doing a simple
dependency analysis of the branch condition in the source of
$\program$, to determine which input variables are involved in its
computation. Then only input bits which are part of those variables need be
tested (for example, in Figure~\ref{figure-modexp} both branch conditions depend only on the input variable $e$, and if there were other input variables the bits making them up could be ignored). This procedure is outlined as Algorithm
\ref{algorithm-supports}. Note that as indicated in Sec.~\ref{sec:prelim}, the formula $\phi$ computed in line~\ref{supports-smt} encodes the semantics of $\program$ so that the truth of $C_b$ (equivalently, the satisfiability of $\phi$) is uniquely determined by an assignment to the input bits. For lack of space, the proofs of Lemma \ref{lemma-algorithm-supports} and the other lemmas in this section are deferred to the Appendix.

\begin{algorithm}
\caption{FindConditionSupports($\program$)}
\label{algorithm-supports}
\begin{algorithmic}[1]
\STATE Compute CFG of $\program$ and identify branch points $\branchpoints$
\IF{there are nested conditionals}
	\RETURN \texttt{FAILURE}
\ENDIF
\FORALL{$b \in \branchpoints$}
	\STATE $\support{b} \leftarrow \emptyset$ \COMMENT{these are global variables}
	\STATE $\phi \leftarrow$ SMT formula representing $\condition{b}$ \label{supports-smt}
	\STATE $V \leftarrow$ input bits appearing in $\phi$
	\FORALL{$v \in V$}
		\IF{the cofactors of $\condition{b}$ w.r.t. $v$ are not equivalent}
			\STATE $\support{b} \leftarrow \support{b} \cup \{ v \}$
		\ENDIF
	\ENDFOR
\ENDFOR
\RETURN \texttt{SUCCESS}
\end{algorithmic}
\end{algorithm}

\begin{lemma}
\label{lemma-algorithm-supports}
Algorithm \ref{algorithm-supports} computes the supports $\support{b}$ correctly, and given an SMT oracle runs in time polynomial in $\abs{\program}$ and $\abs{\inputs}$.
\end{lemma}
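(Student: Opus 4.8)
The plan is to prove the two claims — correctness and the polynomial time bound — separately, since they rest on different facts. For correctness I would first dispose of the \texttt{FAILURE} branch (if there are nested conditionals the algorithm returns before computing any $\support{b}$, so there is nothing to verify), and then argue the \texttt{SUCCESS} case by a two-way inclusion showing that the computed $\support{b}$ equals the true set of support variables of $\condition{b}$ over all of $\inputs$. One inclusion is immediate: every bit $v$ the algorithm places in $\support{b}$ passes the test that the two cofactors of $\condition{b}$ with respect to $v$ are inequivalent, which is exactly the definition of a support variable. The converse inclusion is where the real content lies, because the inner loop only ever examines bits in $V$, the input bits occurring syntactically in $\phi$, rather than all of $\inputs$.

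So the key internal claim I would isolate and prove is that $V \supseteq \support{b}$, i.e.\ that restricting attention to bits appearing in $\phi$ discards nothing. Equivalently, in contrapositive form: if $v \notin V$, then $v$ is not a support variable of $\condition{b}$. This follows from the faithfulness of the encoding established in Sec.~\ref{sec:prelim}: the truth of $\condition{b}$ is determined by $\phi$ (as the satisfiability of $\phi$ under the given input assignment), and if $v$ does not occur in $\phi$ then substituting $v=0$ and $v=1$ yields literally the same formula. Hence flipping $v$ cannot change the truth of $\condition{b}$, the two cofactors with respect to $v$ agree, and $v$ is not a support variable. Combining this with the inner loop's exact cofactor test on the bits of $V$ gives that the computed $\support{b}$ coincides with the true support. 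I would cite the encoding property from the preliminaries rather than reprove it, since the whole soundness of the streamlining hinges on it.

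For the time bound I would charge each line a cost polynomial in $\abs{\program}$ and $\abs{\inputs}$. Building the CFG and identifying $\branchpoints$ is polynomial in $\abs{\program}$, and the nested-conditional check is a single CFG traversal, hence also polynomial. The outer loop runs $\abs{\branchpoints} \le \abs{\program}$ times; within it, constructing $\phi$ is polynomial in $\abs{\program}$ because $\phi$ encodes the loop-free computation leading to $\condition{b}$, whose size is polynomial in the program size, and extracting $V$ with $\abs{V} \le \abs{\inputs}$ is then trivial. The inner loop runs $\abs{V} \le \abs{\inputs}$ times, and each cofactor-inequivalence test is a single SMT-oracle call on a formula built from $\phi$ (comparing two copies of $\phi$ under the two valuations of $v$), again of size polynomial in $\abs{\program}$. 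Multiplying, the algorithm issues $O(\abs{\program} \cdot \abs{\inputs})$ oracle queries with polynomial per-query overhead, yielding the claimed running time.

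The main obstacle is conceptual rather than computational: pinning down precisely that the syntactic set $V$ captures every semantic dependence of $\condition{b}$, so that no support variable is silently dropped. Once the faithfulness of the encoding is invoked both inclusions are short, and the complexity accounting is routine. I would therefore devote most of the write-up to the $V \supseteq \support{b}$ argument and keep the timing analysis brief.
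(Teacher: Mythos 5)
Your proposal is correct and follows essentially the same route as the paper: the paper's proof dismisses correctness with ``Correctness is obvious'' and devotes itself to the same line-by-line cost accounting you give ($\abs{\branchpoints}\abs{\inputs}$ SMT calls, polynomial formula generation and CFG traversal). Your explicit argument that the syntactic set $V$ contains every true support variable --- because a bit absent from $\phi$ yields identical cofactors --- is a welcome filling-in of exactly what the paper leaves implicit, and nothing in it diverges from the paper's intent.
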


If all of the conditionals of $\program$ are pairwise independent,
then $I$ can be partitioned into the pairwise disjoint sets
$\support{b}$ and the set of remaining bits which we write
$\nonesupport$. For any $b \in \branchpoints$, the truth of
$\condition{b}$ depends only on the variables in $\support{b}$, and we
denote by $\truecount{b}$ the number of assignments to those variables
which make $\condition{b}$ true. Then we have the following formula
for the probability of a path: 
\begin{lemma}
\label{lemma-path-prob}
Picking $i \in \{0,1\}^{\inputs}$ uniformly at random, for any path
$P$, the probability that the path corresponding to input $i$ is $P$
is given by
\begin{equation*}
\Pr \left[ \path{i} = P \right] = \left[ 2^{\abs{\nonesupport}} \left( \prod_{b \in \branches{P}} \truecount{b} \right) \left( \prod_{b \in \branchpoints \setminus \branches{P}} \left( 2^{\abs{\support{b}}} - \truecount{b} \right) \right) \right] / 2^{\abs{\inputs}} \enspace.
\end{equation*}
\end{lemma}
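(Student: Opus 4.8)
The plan is to reduce the probability computation to a counting argument, exploiting the bijection between paths and subsets of $\branchpoints$ established in Section~\ref{sec:unnested} together with the disjointness of the support sets. Since the input is drawn uniformly, it suffices to count the inputs $i \in \{0,1\}^{\inputs}$ for which $\path{i} = P$ and then divide by $2^{\abs{\inputs}}$.

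First I would characterize the event $\path{i} = P$ purely in terms of the branch conditions. Because the conditionals are unnested, every execution traverses all of the branch points, and the map $P \mapsto \branches{P}$ is a bijection; hence two paths coincide exactly when their branch sets do. Moreover $\branches{\path{i}}$ is precisely the set of branch points $b$ at which the right edge is taken on input $i$, i.e.~those $b$ for which $\condition{b}$ is true under $i$. Therefore $\path{i} = P$ holds if and only if $\condition{b}$ is true under $i$ for every $b \in \branches{P}$ and false for every $b \in \branchpoints \setminus \branches{P}$.

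Next I would count the satisfying inputs using independence. Since the conditionals are pairwise independent, $\inputs$ partitions into the pairwise disjoint sets $\support{b}$ and $\nonesupport$, and the truth of each $\condition{b}$ depends only on the bits in $\support{b}$. An assignment $i$ thus decomposes uniquely into independent sub-assignments, one to each $\support{b}$ and one to $\nonesupport$, so by the multiplication principle the number of valid $i$ factors as a product over these blocks. For $b \in \branches{P}$ there are $\truecount{b}$ assignments to $\support{b}$ making $\condition{b}$ true; for $b \in \branchpoints \setminus \branches{P}$ there are $2^{\abs{\support{b}}} - \truecount{b}$ assignments making it false; and every one of the $2^{\abs{\nonesupport}}$ assignments to $\nonesupport$ is admissible, since those bits constrain no condition. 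Multiplying these counts and dividing by $2^{\abs{\inputs}}$ yields the claimed formula.

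The only genuinely delicate point is the factorization in the counting step: it relies on the fact that pairwise disjointness of the $\support{b}$ renders the constraints on different blocks logically independent, so that the sub-assignments can be chosen freely and the count is an exact product rather than merely an upper bound. Establishing the bijective characterization of $\path{i} = P$ in the first step is otherwise the main conceptual ingredient, and it is exactly where the unnested-conditionals hypothesis is used; the remaining arithmetic is routine.
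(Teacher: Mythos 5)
Your proposal is correct and follows essentially the same route as the paper's own proof: characterize $\path{i} = P$ via the truth values of the branch conditions (using unnestedness), then count satisfying inputs block-by-block over the partition of $\inputs$ into $\nonesupport$ and the disjoint sets $\support{b}$, and divide by $2^{\abs{\inputs}}$. Your added emphasis on why disjointness makes the count an exact product is a fine (if slightly more explicit) rendering of the same argument.
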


Lemma \ref{lemma-path-prob} allows us to compute the probability of
any path as a simple product if we know the quantities
$\truecount{b}$. Each of these in turn can be computed with a single
call to a model counter, as done in Algorithm
\ref{algorithm-distribution}. 

\begin{algorithm}
\caption{FindWeightDistribution($\program, weights$)}
\label{algorithm-distribution}
\begin{algorithmic}[1]
\IF{FindConditionSupports($\program$) = \texttt{FAILURE} }
	\RETURN \texttt{FAILURE}
\ENDIF
\IF{the sets $\support{b}$ are not pairwise disjoint}
	\RETURN \texttt{FAILURE}
\ENDIF
\FORALL{$b \in \branchpoints$}
	\STATE $\truecount{b} \leftarrow$ model count of $\condition{b}$ over the variables in $\support{b}$
\ENDFOR
\STATE $dist \leftarrow$ constant zero function
\FORALL{execution paths $P$}
	\STATE $p \leftarrow$ probability of $P$ from Lemma \ref{lemma-path-prob}
	\STATE $dist \leftarrow dist [ \wt{P} \mapsto dist(\wt{P}) + p ]$
\ENDFOR
\RETURN $dist$
\end{algorithmic}
\end{algorithm}

\begin{theorem}
Algorithm \ref{algorithm-distribution} correctly solves Problem
\ref{problem-distribution}, and given SMT and model counter oracles
runs in time polynomial in $\abs{\program}$, $\abs{\inputs}$, and the
number of execution paths of $\program$. The model counter is only
queried $\abs{\branchpoints}$ times. 
\end{theorem}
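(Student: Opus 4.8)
The plan is to prove correctness and the complexity/query bound separately, leaning on the two lemmas already established. For correctness, I would first argue that Algorithm~\ref{algorithm-distribution} returns \texttt{FAILURE} in precisely the situations where the hypotheses of Lemma~\ref{lemma-path-prob} are violated. The call to FindConditionSupports returns \texttt{FAILURE} exactly when $\program$ has nested conditionals, and otherwise correctly computes the supports by Lemma~\ref{lemma-algorithm-supports}; the explicit disjointness test then rejects exactly those cases in which the conditionals are not pairwise independent. Hence, whenever control passes both guards, we are in the unnested, pairwise-independent regime in which Lemma~\ref{lemma-path-prob} supplies the probability of every path as a closed-form product of the precomputed counts $\truecount{b}$.

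Next I would verify that the accumulation in the main loop produces the correct distribution. Since $\program$ is deterministic, each input $i$ triggers a unique path $\path{i}$, so the events $\{\path{i} = P\}$ over all paths $P$ partition the probability space, giving $\Pr[\wt{x} = w] = \sum_{P \,:\, \wt{P} = w} \Pr[\path{i} = P]$. The loop visits each execution path $P$ once, computes $\Pr[\path{i} = P]$ via Lemma~\ref{lemma-path-prob}, and adds that value into the bucket indexed by $\wt{P}$; summing over all paths therefore deposits exactly $\Pr[\wt{x} = w]$ at each value $w$, which is the sought distribution. This step needs only that $\wt{P}$ is evaluated correctly, which is immediate from the definition of total weight as the sum of block weights along $P$.

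For the complexity claim and the query count I would trace each phase in turn. FindConditionSupports runs in time polynomial in $\abs{\program}$ and $\abs{\inputs}$ by Lemma~\ref{lemma-algorithm-supports}, and the disjointness check is polynomial in $\abs{\inputs}$ and $\abs{\branchpoints}$. The only phase invoking the model counter is the loop over $b \in \branchpoints$, which makes exactly one query per branch point and hence $\abs{\branchpoints}$ queries in total --- the crux of the speedup, since all per-path probabilities are thereafter obtained arithmetically rather than by further counting. The final loop runs once per execution path (paths can be enumerated from the DAG with polynomial delay), and for each path the probability of Lemma~\ref{lemma-path-prob} is a product of at most $\abs{\branchpoints}$ factors while the map update is a single insertion, so each iteration costs $\mathrm{poly}(\abs{\branchpoints}, \abs{\inputs})$.

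The main obstacle I anticipate is bookkeeping about number sizes rather than anything conceptual: the quantities appearing in Lemma~\ref{lemma-path-prob} include factors as large as $2^{\abs{\inputs}}$, so I must be explicit that these are manipulated in binary, where each has only $O(\abs{\inputs})$ bits, and that the resulting rational probabilities are stored and combined at cost polynomial in $\abs{\inputs}$. With that accounting in place, each of the (number-of-paths many) iterations costs $\mathrm{poly}(\abs{\program}, \abs{\inputs})$, so the total running time is polynomial in $\abs{\program}$, $\abs{\inputs}$, and the number of execution paths, with the model counter queried only $\abs{\branchpoints}$ times, as claimed.
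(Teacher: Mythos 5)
Your proposal is correct and follows essentially the same route as the paper: the paper's proof is simply the observation that the theorem ``follows from Lemmas \ref{lemma-algorithm-supports} and \ref{lemma-path-prob},'' and your argument is a faithful elaboration of exactly that reduction (failure cases from the guards, per-path probabilities from Lemma \ref{lemma-path-prob}, $\abs{\branchpoints}$ counter queries, and per-path polynomial cost). Your extra care about bit-lengths of the intermediate quantities is a reasonable detail the paper leaves implicit.
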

\begin{proof}
Follows from Lemmas \ref{lemma-algorithm-supports} and \ref{lemma-path-prob}.
\end{proof}

Algorithm \ref{algorithm-distribution} improves on path enumeration by
using one invocation of the model counter per branch point, instead of
one invocation per path. In total the algorithm may still take
exponential time, since we need to compute the product of Lemma
\ref{lemma-path-prob} for each path, but if model counting is
expensive there is a substantial savings. 

Further savings are possible if we restrict ourselves to Problem
\ref{problem-values}. For this, we want to compute the possible values of
$\wt{x}$ for all inputs $x$. This is identical to the set of possible
values $\wt{P}$ for all {\em feasible} paths $P$ (the paths
that are executed by some input). Thus, 
we do not need to know the probability
associated with each individual path,
but only which paths are feasible and which are not. Lemma \ref{lemma-path-prob} implies that all paths are feasible (unless some $T_b = 0$ or $T_b = 2^{|S_b|}$, corresponding to a conditional which is identically false or true; then $S_b = \emptyset$, so we can detect and eliminate such trivial conditionals), and this leads to

\begin{lemma}
\label{lemma-submultiset-sums}
Let $D$ be the multiset of differences $\wt{\basispath{b}} -
\wt{\nonepath}$ for $b \in \branchpoints$. Then the possible values of
$\wt{i}$ over all inputs $i \in \{0,1\}^{\inputs}$ are the possible values of
$\wt{\nonepath} + D^+$, where $D^+$ is the set of sums of
submultisets of $D$. 
\end{lemma}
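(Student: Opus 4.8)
The plan is to reduce the statement to a one-line computation of $\wt{P}$ for each feasible path using equation (\ref{equation-path-rep}) and the linearity of $\wt{\cdot}$. First I would recall, as noted in the discussion of Problem \ref{problem-values}, that the set of attained weights $\wt{\{0,1\}^{\inputs}}$ coincides with the set $\{ \wt{P} : P \text{ feasible} \}$, and that once trivial conditionals have been discarded every path is feasible. Because the conditionals are unnested, $P \mapsto \branches{P}$ is a bijection between paths and subsets of $\branchpoints$; hence it suffices to compute $\wt{P}$ as $S := \branches{P}$ ranges over all subsets of $\branchpoints$.

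Next I would apply the linear functional $\wt{\cdot}$ to equation (\ref{equation-path-rep}). Writing $k = |S|$ and splitting off a single copy of $\wt{\nonepath}$, linearity gives
\[
\wt{P} = \sum_{c \in S} \wt{\basispath{c}} - (k-1)\,\wt{\nonepath} = \wt{\nonepath} + \sum_{c \in S} \bigl( \wt{\basispath{c}} - \wt{\nonepath} \bigr),
\]
since the leftover $-k\,\wt{\nonepath}$ is absorbed into the $k$-term sum over $S$. Each summand $\wt{\basispath{c}} - \wt{\nonepath}$ is by definition an element of the multiset $D$, so the final sum is exactly the sum of the submultiset of $D$ indexed by $S$.

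To finish, I would observe that as $S$ ranges over all subsets of $\branchpoints$ the indexed submultisets range over all submultisets of $D$, and therefore their sums range over exactly $D^+$. Combined with the previous display this yields $\{ \wt{P} : P \text{ feasible} \} = \wt{\nonepath} + D^+$, which is the claim.

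There is no serious obstacle here; the only point requiring care is the justification for applying $\wt{\cdot}$ term-by-term to equation (\ref{equation-path-rep}). That identity is realized through intermediate integer combinations of the edge-coordinate vectors which are not themselves $\{0,1\}$-valued paths (the coefficient $-(k-1)$ is negative once $k \geq 2$), so the computation relies on $\wt{\cdot}$ being linear with arbitrary integer coefficients, not merely additive along a genuine path; this is exactly the linearity of $\wt{\cdot}$ recorded in Sec.~\ref{sec:prelim}. A minor secondary remark is that distinct branch points may contribute equal differences to $D$, so several subsets $S$ can induce the same submultiset --- this is harmless, since we care only about the resulting set of sums, which is $D^+$ by definition.
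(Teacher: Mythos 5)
Your proposal is correct and follows essentially the same route as the paper's own proof: establish that every path is feasible once trivial conditionals are removed (via Lemma \ref{lemma-path-prob}), apply the linearity of $\wt{\cdot}$ to Equation \ref{equation-path-rep} to get $\wt{P} = \wt{\nonepath} + \sum_{c \in \branches{P}} (\wt{\basispath{c}} - \wt{\nonepath})$, and use the unnested-conditional bijection between paths and subsets of $\branchpoints$ to conclude that the attainable weights are exactly $\wt{\nonepath} + D^+$. Your two closing remarks --- that $\wt{\cdot}$ must be linear over arbitrary integer coefficients since $-(k-1)$ is negative, and that repeated differences in the multiset $D$ are harmless --- are sound points of care that the paper leaves implicit.
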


To use Lemma \ref{lemma-submultiset-sums} to solve Problem
\ref{problem-values}, we must find the set $D^+$. The brute-force
approach of enumerating all submultisets is obviously impractical
unless $D$ is very small. We cannot hope to do better than exponential
time in the worst case\footnote{Although we note that for channel
capacity analysis we only need $\abs{D^+}$ and not $D^+$ itself, and
there could be a faster (potentially even polynomial-time) algorithm
to find this value.}, since $D^+$ can be exponentially larger than
$D$. However, in many practical situations $D^+$ is not too much
larger than $D$. This is because the paths $\basispath{b}$ often have
similar weights, so the variation $V = \max D - \min D$ is small and we can apply the following lemma:

\begin{lemma}
\label{lemma-sumset-size}
If $V = \max D - \min D$, then $\abs{D^+} = O(V \abs{D}^2)$. 
\end{lemma}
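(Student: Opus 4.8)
The plan is to bound the number of distinct submultiset sums by grouping them according to the \emph{cardinality} of the submultiset, rather than trying to reason about the sums all at once. Write $n = \abs{D}$, and let $m = \min D$ and $M = \max D$, so that $V = M - m$ and every element of $D$ lies in the integer interval $[m, M]$.

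First I would observe that any submultiset of $D$ of cardinality $k$ has a sum lying in $[km, kM]$, since it is a sum of $k$ integers each of which is between $m$ and $M$. This interval contains exactly $kV + 1$ integers, so there are at most $kV + 1$ distinct sums arising from submultisets of size $k$. Next, since every submultiset has some cardinality $k \in \{0, 1, \dots, n\}$, the set $D^+$ is the union over these values of $k$ of the corresponding size-$k$ submultiset sums, and a union bound yields
\[
\abs{D^+} \le \sum_{k=0}^{n} (kV + 1) = (n+1) + V \cdot \frac{n(n+1)}{2} = O(V n^2) \enspace,
\]
where the final estimate uses that the additive term $n+1$ is $O(n^2) = O(Vn^2)$ once $V \ge 1$.

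The argument is elementary, so I do not anticipate a genuine obstacle; the two points that require a little care are the decision to partition by cardinality (which is what makes the range of achievable sums shrink to an interval of length $kV$, giving the crucial per-level bound) and the bookkeeping of the lower-order additive terms. In particular, the degenerate case $V = 0$ must be handled separately, since then all elements of $D$ coincide and $\abs{D^+} \le n+1$ holds directly; the claimed bound $O(V\abs{D}^2)$ is understood to absorb this constant-variation case, and otherwise follows from the displayed inequality above.
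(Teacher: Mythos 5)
Your proof is correct, and it reaches the same $O(V\abs{D}^2)$ bound by a slightly different decomposition than the paper's. The paper first translates every element by $-\min D$, so that all shifted elements lie in $[0,V]$ and hence \emph{all} shifted submultiset sums lie in a single interval of length $(n-1)V$; it then recovers the original sums by adding back $k\cdot\min D$ for the $n+1$ possible cardinalities $k$, giving the product bound $(n+1)\left[(n-1)V+1\right]$. You instead partition $D^+$ directly by cardinality and observe that the size-$k$ sums lie in the interval $[km,kM]$ of $kV+1$ integers, then sum over $k$; the two arguments encode the same underlying fact (fixing the cardinality pins the sum down to within an additive slack of about $kV$), but yours avoids the normalization step, is marginally tighter (leading term $V n^2/2$ rather than $V n^2$), and is arguably cleaner to state. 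Your explicit caveat about the degenerate case $V=0$ is well taken --- the paper's own final step $(n+1)\left[(n-1)V+1\right]=O(V\abs{D}^2)$ has exactly the same literal failure there, and both proofs implicitly read the bound as absorbing the additive $O(\abs{D})$ term.
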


Small differences between weights are exploited by Algorithm
\ref{algorithm-sums}, which as shown in the Appendix
computes $D^+$ in $O(\abs{D} \abs{D^+})$
time. By Lemma \ref{lemma-sumset-size}, the algorithm's runtime is
$O(|D| \cdot V |D|^2) = O(V \abs{D}^3)$, so it is very efficient when $V$ is small. The
essential idea of the algorithm is to handle one element $x \in D$ at
a time, keeping a list of possible sums found so far sorted so that
updating it with the new sums possible using $x$ is a linear-time
operation. For simplicity we only show how positive $x \in D$ are
handled, but see the analysis in the
Appendix for the general case.  

\begin{algorithm}
\caption{SubmultisetSums($D$)}
\label{algorithm-sums}
\begin{algorithmic}[1]
\STATE $sums \leftarrow (0)$ \label{countsums-init}
\FORALL{$x \in D$} \label{outer-loop}
	\STATE $newSums \leftarrow (sums[0])$ \label{countsums-newsums-init}
	\STATE $i \rightarrow 1$ \COMMENT{index of next element of $sums$ to add to $newSums$}
	\FORALL{$y \in sums$} \label{countsums-inner-loop} 
		\STATE $z \leftarrow x + y$ \label{countsums-z}
		\WHILE{$i < \mathsf{len}(sums)$ \AND $sums[i] < z$} \label{countsums-scanloop}
			\STATE $newSums.\mathsf{append}(sums[i])$ \label{countsums-oldsum}
			\STATE $i \leftarrow i + 1$
		\ENDWHILE
		\STATE $newSums.\mathsf{append}(z)$ \label{countsums-newsum}
		\IF{$i < \mathsf{len}(sums)$ \AND $sums[i] = z$} \label{countsums-duplicate}
			\STATE $i \leftarrow i + 1$
		\ENDIF
	\ENDFOR
	\STATE $sums \leftarrow newSums$ \label{countsums-update}
\ENDFOR
\RETURN $sums$ \label{countsums-return}
\end{algorithmic}
\end{algorithm}

Using Algorithm \ref{algorithm-sums} together with Lemma
\ref{lemma-submultiset-sums} gives an efficient algorithm to solve
Problem \ref{problem-values}, outlined as Algorithm
\ref{algorithm-values}. This algorithm has runtime polynomial in the
size of its input and output. 

\begin{algorithm}
\caption{FindPossibleWeights($\program, weights$)}
\label{algorithm-values}
\begin{algorithmic}[1]
\IF{FindConditionSupports($\program$) = \texttt{FAILURE} }
	\RETURN \texttt{FAILURE}
\ENDIF
\IF{the sets $\support{b}$ are not pairwise disjoint}
	\RETURN \texttt{FAILURE}
\ENDIF
\STATE Eliminate branch points with $\support{b} = \emptyset$ (trivial conditionals)
\STATE $D \leftarrow$ empty multiset
\FORALL{$b \in \branchpoints$}
	\STATE $d \leftarrow \wt{\basispath{b}} - \wt{\nonepath}$
	\STATE $D \leftarrow D \cup \{ d \}$
\ENDFOR
\STATE $D^+ \leftarrow \text{SubmultisetSums}(D)$
\RETURN $\wt{\nonepath} + D^+$
\end{algorithmic}
\end{algorithm}

\begin{theorem}
Algorithm \ref{algorithm-values} solves Problem \ref{problem-values}
correctly, and given an SMT oracle runs in time polynomial in
$\abs{\program}$, $\abs{\inputs}$, and
$\abs{\wt{\{0,1\}^{\inputs}}}$. 
\end{theorem}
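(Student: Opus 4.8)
The plan is to establish the two assertions---correctness and the output-sensitive runtime bound---separately, in each case chaining the already-proved lemmas and the Appendix analysis of Algorithm \ref{algorithm-sums} rather than reasoning from scratch. For correctness, I would show that the set $\wt{\nonepath} + D^+$ returned by Algorithm \ref{algorithm-values} is exactly the support $\wt{\{0,1\}^{\inputs}}$. The two \texttt{FAILURE} guards secure the structural hypotheses of our framework: the call to FindConditionSupports is correct by Lemma \ref{lemma-algorithm-supports} and rejects nested conditionals, while the disjointness test enforces pairwise independence, so that $\inputs$ splits into the $\support{b}$ together with $\nonesupport$ exactly as needed. I would then apply Lemma \ref{lemma-submultiset-sums}, which identifies the attainable weights with $\wt{\nonepath} + D^+$, and invoke the Appendix fact that Algorithm \ref{algorithm-sums} correctly computes the submultiset-sum set $D^+$. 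Since adding the constant $\wt{\nonepath}$ is a bijection, the returned set equals the support.

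The delicate point---and the step I expect to be the main obstacle---is the trivial-conditional elimination. Lemma \ref{lemma-submultiset-sums} presumes every path is feasible, which by Lemma \ref{lemma-path-prob} fails exactly when some $\truecount{b}$ equals $0$ or $2^{\abs{\support{b}}}$, i.e.\ for an identically false or true conditional, and in that case $\support{b} = \emptyset$. I would therefore verify that deleting the branch points with $\support{b} = \emptyset$ yields a reduced program with the same feasible-path set, and hence the same support, in which every path is now feasible: an identically false conditional contributes no feasible right branch and is simply dropped, whereas an identically true one is resolved by fixing its right edge, folding that block's weight into the recomputed $\wt{\nonepath}$. After this reduction every surviving $\truecount{b}$ lies strictly between $0$ and $2^{\abs{\support{b}}}$, so Lemma \ref{lemma-submultiset-sums} applies verbatim. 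This is the only place where one must argue about the program itself rather than merely composing prior results.

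For the runtime I would bound each phase and note that no model counter is invoked---only the SMT oracle, inside FindConditionSupports. That call is polynomial in $\abs{\program}$ and $\abs{\inputs}$ by Lemma \ref{lemma-algorithm-supports}; the disjointness check, the elimination step, and the construction of $D$ (each entry $\wt{\basispath{b}} - \wt{\nonepath}$ being a difference of two path weights, computable in time polynomial in $\abs{\program}$) are all polynomial, with $\abs{D} \le \abs{\branchpoints} \le \abs{\program}$. The dominant cost is Algorithm \ref{algorithm-sums}, which runs in $O(\abs{D}\,\abs{D^+})$ time. The observation that closes the bound is that correctness has already identified $\wt{\nonepath} + D^+$ with the output, so $\abs{D^+} = \abs{\wt{\{0,1\}^{\inputs}}}$; together with $\abs{D} \le \abs{\program}$ this makes the subroutine, and the final shift by $\wt{\nonepath}$, run in time polynomial in $\abs{\program}$ and $\abs{\wt{\{0,1\}^{\inputs}}}$. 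Summing over phases gives the claimed bound, and I note that one need not route through the worst-case estimate of Lemma \ref{lemma-sumset-size}, since the output-sensitive bound is precisely what the theorem asks for.
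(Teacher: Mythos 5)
Your proposal is correct and follows the same route as the paper, whose proof is a one-line appeal to Lemmas \ref{lemma-algorithm-supports} and \ref{lemma-submultiset-sums} and the Appendix analysis of Algorithm \ref{algorithm-sums}; you have simply spelled out the chaining, including the observation that $\abs{D^+} = \abs{\wt{\{0,1\}^{\inputs}}}$ gives the output-sensitive bound directly. Your handling of the trivial-conditional elimination (in particular, fixing the right edge and folding its weight into the recomputed $\wt{\nonepath}$ for an identically true condition) is in fact more careful than the paper's single-line elimination step, and is a welcome addition rather than a deviation.
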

\begin{proof}
Clear from Lemmas \ref{lemma-algorithm-supports} and
\ref{lemma-submultiset-sums}, and the analysis of Algorithm
\ref{algorithm-sums} (see the Appendix).
\end{proof}

\subsection{More General Program Structure}

As presented above, our algorithms are restricted to loop-free programs which have only unnested, independent conditionals. However, our techniques are still helpful in analyzing a large class of more general programs. Loops with a bounded number of iterations can be unrolled. Unrolling the common program structure consisting of a for-loop with a conditional in the body yields a loop-free program with unnested conditionals. If the conditionals are pairwise independent, as in the \texttt{modexp} example, our methods can be directly applied. If the number of dependent conditionals, say $D$, is nonzero but relatively small, then each of the $2^D$ assignments to these conditionals can be checked for feasibility with an SMT query, and the remaining conditionals can be handled using our algorithms. If many conditionals are dependent then checking all possibilities requires an exponential amount of work, but we can efficiently handle a limited failure of independence. An example where this is the case is the Mersenne Twister example we discuss in Sec.~\ref{sec:expts}, where 2 out of 624 conditionals are dependent. A small level of conditional nesting can be handled in a similar way. In general, when analyzing a program with complex branching structure, our methods can be applied to those regions of the program which satisfy our requirements. Such regions do frequently occur in real-world programs, and thus our techniques are useful in practice.

\section{Experiments}
\label{sec:expts}




As mentioned in Sec.~\ref{section-preliminaries}, Problem
\ref{problem-values} subsumes the computation of the channel capacity
of the timing side-channel on a platform where basic blocks have
constant runtimes. To demonstrate the effectiveness of our techniques,
we use them to compute the timing channel capacities of two real-world
programs on the \emph{PTARM} simulator \cite{ptarm}. The tool
\emph{GameTime}~\cite{seshia-tacas11} was used to generate SMT formulae
representing the programs, and to interface with the simulator to
perform the timing measurements of the basis paths. SMT formulae for
testing cofactor equivalence were generated and solved using \emph{Z3}
\cite{z3}. Model counting was done by using Z3 to convert SMT queries
to propositional formulae, which were then given to the model counter
\emph{Cachet} \cite{cachet}. Raw data from our experiments can be obtained at \url{http://math.berkeley.edu/~dfremont/SMT2014Data/}.

The first program tested was the \texttt{modexp} program already
described above, using a 32-bit exponent. With $2^{32}$ paths,
enumerating and model counting all paths is clearly infeasible. Our
new approach was quite fast: finding the branch supports, model
counting\footnote{We note that for this program, each branch condition
had only a single support variable, and thus we have $\truecount{b} =
1$ automatically without needing to do model counting.}, and running
Algorithm \ref{algorithm-sums} took only a few seconds, yielding a timing channel capacity of just over 8 bits. In fact, although the number of paths is very large, the per-path cost of Algorithm \ref{algorithm-distribution} is so low that we were able to compute \texttt{modexp}'s entire timing distribution with it in 23 hours (effectively analyzing more than 50,000 paths per second). The distribution is shown in Figure \ref{figure-modexp-distribution}.

\begin{figure}
\centering
\includegraphics[height=2.1in]{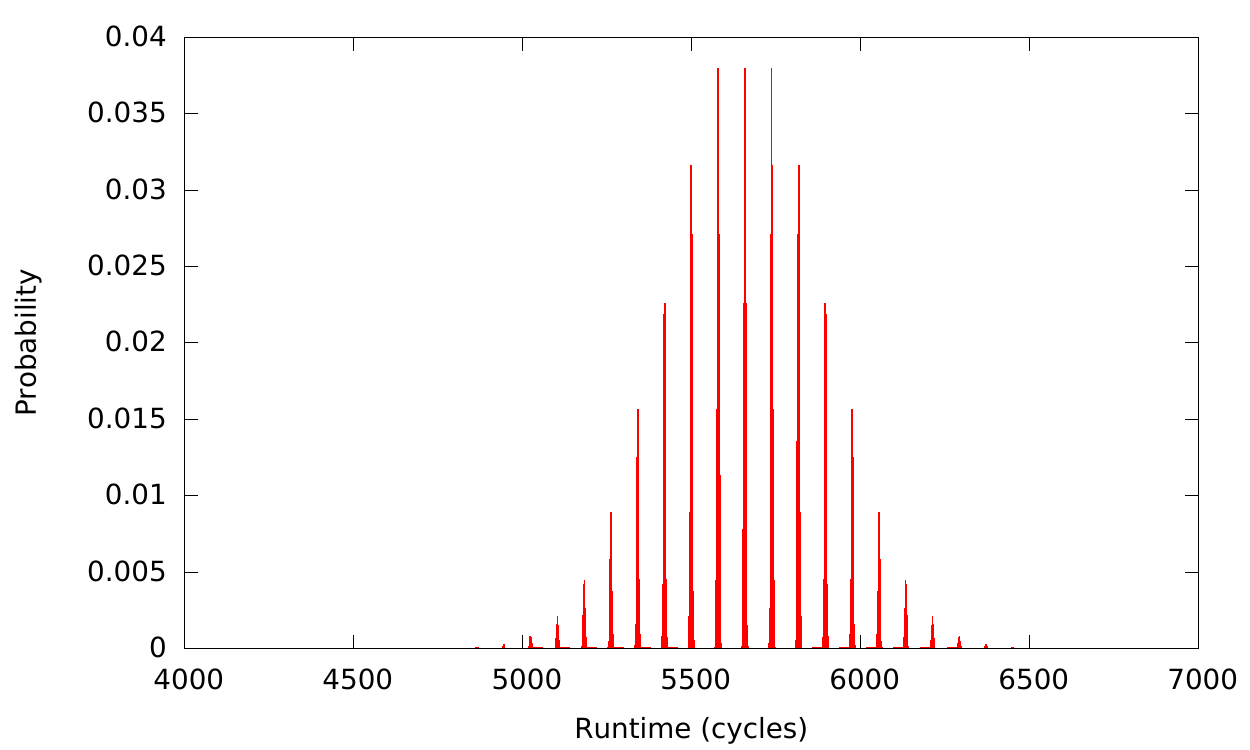}
\caption{Timing distribution of 32-bit \texttt{modexp}, computed with Algorithm \ref{algorithm-distribution}.}
\label{figure-modexp-distribution}
\end{figure}

The second program we tested was the state update function of the
widely-used pseudorandom number generator the Mersenne Twister
\cite{mersenne-twister}. We tested an implementation of the most
common variant, \verb|MT19937|, which is available at
\cite{mt-implementation}. On every 624th query to the generator,
\verb|MT19937| performs a nontrivial updating of its internal state,
an array of 624 32-bit integers. We analyzed the program to see how
much information about this state is leaked by the time needed to do
the update. The relevant portion of the code has $2^{624}$ paths and
thus would be completely impossible to analyze using path
enumeration. With our techniques the analysis became feasible: finding
the branch supports took 54 minutes, while Algorithm \ref{algorithm-sums} took only 0.2 seconds because there was a high level of uniformity across the
path timings. The channel capacity was computed to be around 9.3
bits. We note that among the 624 branch conditions there are two which
are not independent. Thus all four truth assignments to these
conditions needed to be checked for feasibility before applying our
techniques to the remaining 622 conditionals.

\section{Conclusions}
\label{sec:concl}

We presented a formalization of certain quantitative program analysis
problems that are defined over a weighted control-flow graph
representation. These problems are concerned with understanding how a
quantitative property of a program is distributed over the space of program
paths, and computing metrics over this distribution.
These computations rely on the ability to solve a set of satisfiability
(SAT/SMT) and model counting problems.
Previous work along these
lines has only been applicable to small programs with very few conditionals,
since it typically depends on enumerating all execution paths and the
number of these can be exponential in the size of the program. We
investigated how in certain situations where the number of paths is
indeed exponential, special branching structure can be exploited to
gain efficiency. When the conditionals are unnested and independent,
we showed how the number of expensive model counting calls can be
reduced to be linear in the size of the program, leaving only a very
fast product computation to be done for each path. Furthermore, a
special case of the general problem, which for example is sufficient
for the computation of side-channel capacities, can be solved avoiding
exponential path enumeration entirely. Finally, we showed the
practicality of our methods by using them to compute the timing
side-channel capacities of two commonly-used programs with very large
numbers of paths. 


\section*{Acknowledgements}

Daniel wishes to thank Jon Kotker, Rohit Sinha, and Zach Wasson for providing assistance with some of the tools used in our experiments, and Garvit Juniwal for a useful discussion of Algorithm \ref{algorithm-sums} and Lemma \ref{lemma-sumset-size}. The authors also thank the anonymous reviewers for their helpful comments and suggestions. This work was supported in part by the TerraSwarm Research Center, one of six centers supported by the STARnet phase of the Focus Center Research Program (FCRP), a Semiconductor Research Corporation program sponsored by MARCO and DARPA.

\bibliographystyle{abbrv}
\bibliography{main}

\begin{thebibliography}{10}

\bibitem{BKR}
M.~Backes, B.~K\"opf, and A.~Rybalchenko.
\newblock Automatic discovery and quantification of information leaks.
\newblock In {\em 30th IEEE Symposium on Security and Privacy}, pages 141--153.
  IEEE, 2009.

\bibitem{z3}
L.~de~Moura and N.~Bj\o{}rner.
\newblock Z3.
\newblock \url{http://z3.codeplex.com/}.

\bibitem{gomes-modelcountbookch2009}
C.~P. Gomes, A.~Sabharwal, and B.~Selman.
\newblock Model counting.
\newblock {\em Handbook of Satisfiability}, pages 633--654, 2009.

\bibitem{ptarm}
E.~A. Lee and S.~A. Edwards.
\newblock {PTARM} simulator.
\newblock
  \url{http://chess.eecs.berkeley.edu/pret/src/ptarm-1.0/ptarm_simulator.html}.

\bibitem{mersenne-twister}
M.~Matsumoto and T.~Nishimura.
\newblock {Mersenne} twister: A 623-dimensionally equidistributed uniform
  pseudo-random number generator.
\newblock {\em ACM Transactions on Modeling and Computer Simulation (TOMACS)},
  8(1):3--30, 1998.

\bibitem{basis-paths}
T.~J. McCabe.
\newblock A complexity measure.
\newblock {\em IEEE Transactions on Software Engineering}, (4):308--320, 1976.

\bibitem{mt-implementation}
T.~Nishimura and M.~Matsumoto.
\newblock Implementation of {MT19937}.
\newblock
  \url{http://www.math.sci.hiroshima-u.ac.jp/~m-mat/MT/MT2002/CODES/MTARCOK/mt19937ar-cok.c}.

\bibitem{cachet}
T.~Sang, F.~Bacchus, P.~Beame, H.~Kautz, and T.~Pitassi.
\newblock Combining component caching and clause learning for effective model
  counting.
\newblock In {\em 7th International Conference on Theory and Applications of
  Satisfiability Testing}, 2004.

\bibitem{seshia-tacas11}
S.~A. Seshia and J.~Kotker.
\newblock {GameTime}: A toolkit for timing analysis of software.
\newblock In {\em 17th International Conference on Tools and Algorithms for the
  Construction and Analysis of Systems (TACAS)}, 2011.

\bibitem{seshia-acmtecs12}
S.~A. Seshia and A.~Rakhlin.
\newblock Quantitative analysis of systems using game-theoretic learning.
\newblock {\em ACM Transactions on Embedded Computing Systems (TECS)},
  11(S2):55:1--55:27, 2012.

\bibitem{qif-foundations}
G.~Smith.
\newblock On the foundations of quantitative information flow.
\newblock In {\em Foundations of Software Science and Computational
  Structures}, pages 288--302. Springer, 2009.

\end{thebibliography}

\appendix

\section{Proofs}

\subsection{Lemmas from Sec. \ref{sec:algos-theory}}

\newtheorem*{lemma:AlgorithmSupports}{Lemma \ref{lemma-algorithm-supports}}
\begin{lemma:AlgorithmSupports}
Algorithm \ref{algorithm-supports} computes the supports $\support{b}$ correctly, and given an SMT oracle runs in time polynomial in $\abs{\program}$ and $\abs{\inputs}$.
\end{lemma:AlgorithmSupports}
\begin{proof}
Correctness is obvious. Computation of the CFG can clearly be done in
time linear in $\abs{\program}$, and likewise for finding nested
conditionals (say by doing a DFS and keeping track of the nesting
level). Generating SMT representations of the branch conditions and
doing dependency analyses on them can be done in time polynomial in
$\abs{\program}$. In the worst-case scenario where every input bit
appears in every branch condition, checking all the cofactor
equivalences requires $\abs{\branchpoints} \abs{\inputs}$ calls to the
SMT solver (generating the SMT query for a single cofactor equivalence
obviously takes time linear in $\abs{\program}$). So the algorithm
runs in time polynomial in $\abs{\program}$ and $\abs{\inputs}$.
\end{proof}

\newtheorem*{lemma:PathProb}{Lemma \ref{lemma-path-prob}}
\begin{lemma:PathProb}
Picking $i \in \{0,1\}^{\inputs}$ uniformly at random, for any path
$P$, the probability that the path corresponding to input $i$ is $P$
is given by
\begin{equation*}
\Pr \left[ \path{i} = P \right] = \left[ 2^{\abs{\nonesupport}} \left( \prod_{b \in \branches{P}} \truecount{b} \right) \left( \prod_{b \in \branchpoints \setminus \branches{P}} \left( 2^{\abs{\support{b}}} - \truecount{b} \right) \right) \right] / 2^{\abs{\inputs}} \enspace.
\end{equation*}
\end{lemma:PathProb}
\begin{proof}
We show that the product in square brackets is the number of $i \in
\{0,1\}^{\inputs}$ such that $\path{i} = P$. Since the conditionals of
$\program$ are unnested, $\path{i} = P$ iff $i$ makes $\condition{b}$
true for exactly those $b \in \branches{p}$. To specify $i$ we must
give its values on $\nonesupport$, on the sets $\support{b}$ for $b
\in \branches{p}$, and on the sets $\support{b}$ for $b \in
\branchpoints \setminus \branches{p}$. On $\nonesupport$ the bits may
have any value, since they do not affect any of the branch conditions,
giving the first factor of the product. On $\support{b}$ for $b \in
\branches{p}$ the bits must be set to make $\condition{b}$ true, and
by definition there are $\truecount{b}$ ways of doing this, giving the
second factor. Finally, on $\support{b}$ for $b \in \branchpoints
\setminus \branches{p}$ the bits must be set to make $\condition{b}$
false, and there are $2^{\abs{\support{b}}} - \truecount{b}$ ways of
doing this, giving the third factor.
\end{proof}

\newtheorem*{lemma:SubmultisetSums}{Lemma \ref{lemma-submultiset-sums}}
\begin{lemma:SubmultisetSums}
Let $D$ be the multiset of differences $\wt{\basispath{b}} -
\wt{\nonepath}$ for $b \in \branchpoints$. Then the possible values of
$\wt{i}$ over all inputs $i \in \{0,1\}^{\inputs}$ are the possible values of
$\wt{\nonepath} + D^+$, where $D^+$ is the set of sums of
submultisets of $D$. 
\end{lemma:SubmultisetSums}
\begin{proof}
By Lemma \ref{lemma-path-prob}, unless there is a ``fake'' branch
point whose condition is identically true or false ($\abs{\support{b}}
= 0$), every path has nonzero probability and is therefore feasible (we can detect and eliminate fake branch points when we
compute the condition supports).

Now for any path $P$, by Equation \ref{equation-path-rep} and
linearity of the weight function we have 
\begin{align}
\label{equation-weight}
\wt{P} &= \left( \sum_{c \in \branches{P}} \wt{\basispath{c}} \right) - \left( \abs{\branches{P}} - 1 \right) \wt{\nonepath} \nonumber \\
&= \wt{\nonepath} + \sum_{c \in \branches{P}} \left( \wt{\basispath{c}} - \wt{\nonepath} \right) \enspace.
\end{align}
Since the conditionals of $\program$ are unnested, for every $B'
\subseteq \branchpoints$ there is some path $P$ such that $B' =
\branches{P}$. Thus by Equation \ref{equation-weight} the possible
values of $\wt{P}$ are all sums of elements of $D$ shifted by
$\wt{\nonepath}$.
\end{proof}

\newtheorem*{lemma:SumsetSize}{Lemma \ref{lemma-sumset-size}}
\begin{lemma:SumsetSize}
If $V = \max D - \min D$, then $\abs{D^+} = O(V \abs{D}^2)$. 
\end{lemma:SumsetSize}
\begin{proof}
We may assume $D$ has at least two elements --- list these in
increasing order as $d_1, \dots, d_n$. Letting $\tilde{D}$ be the
multiset of the values $\tilde{d}_i = d_i - d_1$, we have $0 =
\tilde{d}_1 \le \dots \le \tilde{d}_n = V$. Now for any $s \in
\tilde{D}^+$ we have $0 \le s \le \sum_i \tilde{d}_i \le (n-1)V$, and
so $\abs{\tilde{D}^+} \le (n-1)V + 1$. Finally, observe that for any
$0 \le k \le n$ and indices $1 \le i_1 < \dots < i_k \le n$, we have
$d_{i_1} + \dots + d_{i_k} = d'_{i_1} + \dots + d'_{i_k} + k
d_1$. Therefore we have $\abs{D^+} \le \abs{\tilde{D}^+} (n+1) \le
(n+1) \left[ (n-1)V + 1 \right] = O(V \abs{D}^2)$.
\end{proof}

\subsection{Analysis of Algorithm \ref{algorithm-sums}}
\label{section-sums-analysis}

Algorithm \ref{algorithm-sums} can easily be adapted to handle
arbitrary integers by removing any occurrences of 0 in $D$ and altering the inner loop so that when $x < 0$, we
enumerate $sums$ and build up $newSums$ from right to left instead of
from left to right. Since the unmodified algorithm is simpler to state
and slightly faster (having one less conditional in the outer loop),
we restrict our analysis to that case. Note however that if only the
\emph{size} of $D^+$ is needed and not its elements (as when computing
channel capacity), we can apply the following lemma: 
\begin{lemma}
If $D$ is a multiset of integers, and $T$ is $D$ with the absolute
value applied to all of its elements, then $\abs{D^+} = \abs{T^+}$. 
\end{lemma}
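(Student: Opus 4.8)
The plan is to build an explicit bijection between submultisets of $D$ and submultisets of $T$ that shifts every sum by a single fixed constant; the two sets of achievable sums will then be translates of one another and hence have equal size. The governing observation is that, for a fixed element, negating its value has exactly the same effect on a submultiset sum as toggling whether that element is included --- up to an additive constant. Concretely, I would view both $D$ and $T$ as indexed by a common set of slots (since $T$ is obtained from $D$ by changing only the value stored in each slot, not the number of slots), and partition these slots into the positions holding positive, zero, and negative values. The zero positions are harmless, as they never change any sum, so I would leave them fixed. I would then let $K$ denote the sum of the absolute values of the negative entries.

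Next I would define a map $\psi$ on subsets of this common index set: keep the chosen positive and zero positions unchanged, but \emph{complement} the chosen set among the negative positions. This $\psi$ is a bijection (indeed self-inverse as a map on index subsets), and because the index set is shared it is a bijection from submultisets of $D$ to submultisets of $T$. The core computation is to verify that for any submultiset $S$,
\[ \operatorname{sum}_T(\psi(S)) = \operatorname{sum}_D(S) + K, \]
which holds because the positive contributions are untouched, while each negative value $-a$ that contributed $-a$ to $\operatorname{sum}_D(S)$ is replaced in $T$ by $+a$ and is now counted exactly in the complementary cases; summing over the negative slots turns a net contribution of $-\sum a$ into $K - \sum a$.

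From this sum-shift identity I would conclude that $T^+ = D^+ + K$ as sets of integers: every element of $T^+$ equals $\operatorname{sum}_D(S) + K$ for some $S$, and conversely, so translation by $K$ is a bijection $D^+ \to T^+$, giving $\abs{D^+} = \abs{T^+}$. I expect the only real subtlety --- rather than any deep obstacle --- to be bookkeeping: confirming that $\psi$ respects multiplicities, so that it is a genuine bijection on submultisets and not merely on distinct values, and that zero entries may be discarded without affecting either $D^+$ or $T^+$. Once the shared-index viewpoint is adopted, both points are immediate, since $D$ and $T$ occupy the same slots and differ only in the value attached to each.
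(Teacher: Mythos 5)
Your proposal is correct and is essentially the paper's argument: both establish that $T^+$ is a translate of $D^+$ by the sum of the absolute values of the negative entries, using the same core observation that negating an element is equivalent (up to a constant shift) to toggling its membership in the submultiset. The only difference is packaging --- the paper flips one negative element at a time, proving $D^+ = \tilde{D}^+ + x$ by the same include/exclude case split and then iterating, whereas you perform all the flips at once via an explicit complementation bijection on a shared index set.
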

\begin{proof}
We show that if $D = R \cup \{ x \}$ and $\tilde{D} = R \cup \{ -x
\}$, then $D^+ = \tilde{D}^+ + x$. This suffices to prove the general
case, since flipping the signs on all negative elements of $D$ one by
one and applying the above result each time shows that $D^+$ is $T^+$
shifted by some constant. 

Take any $y \in D^+$. If $y$ can be written as a sum of elements of
$R$, then letting $y'$ be the same sum plus $-x$ we have $y' \in
\tilde{D}^+$ and thus $y = y' + x \in \tilde{D}^+ + x$. Otherwise, $y$
equals $x$ plus some sum of elements of $R$, and we let $y'$ be the
latter sum. Then $y' \in \tilde{D}^+$, and again $y = y' + x \in
\tilde{D}^+ + x$. So $D^+ \subseteq \tilde{D}^+ + x$. Conversely, take
any $y \in \tilde{D}^+ + x$, so that $y$ is $x$ plus a sum of elements
of $\tilde{D}$. If this sum contains $-x$, then $y$ is just equal to a
sum of elements of $R$, and so is in $D^+$. Otherwise, $y$ is $x$ plus
a sum of elements of $R$, and so again is in $D^+$. Therefore
$\tilde{D}^+ + x \subseteq D^+$, and so $D^+ = \tilde{D}^+ + x$. 
\end{proof}
So if we only need $\abs{D^+}$, as a preprocessing step we can take
the absolute value of all elements of $D$ to ensure they are
positive (removing 0), and then apply the unmodified Algorithm
\ref{algorithm-sums}. This was done in our experiments. Now we prove 

\begin{theorem}
\label{theorem-algorithm-sums}
Algorithm \ref{algorithm-sums} is correct, and has worst-case runtime $\Theta(\abs{D} \abs{D^+})$.
\end{theorem}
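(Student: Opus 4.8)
The plan is to prove correctness by a loop invariant on the outer loop and then to bound the runtime by an amortized count together with an explicit worst-case family. Throughout I restrict, as the paper does, to the stated case in which every $x \in D$ is positive; the general case follows from the symmetric right-to-left variant of the inner loop, and when only $\abs{D^+}$ is needed it follows from the absolute-value reduction proved just above.

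For correctness, the invariant I would maintain on the outer loop is that after the first $k$ elements of $D$ have been processed, the list $\mathit{sums}$ is exactly the set of submultiset sums of those $k$ elements, listed in strictly increasing order with no repeats. The base case is immediate: before the loop $\mathit{sums} = (0)$, and the empty multiset has the single submultiset sum $0$. For the inductive step, write $S$ for the current (sorted, duplicate-free) list and let $x$ be the element being processed. The submultiset sums after adding one copy of $x$ are precisely $S \cup (S + x)$, where $S + x = \{ s + x : s \in S \}$, since a submultiset either omits the new copy (sum in $S$) or includes it (sum in $S + x$). It therefore suffices to show that the body of the outer loop rebuilds $\mathit{newSums}$ as the sorted, deduplicated merge of $S$ and $S + x$.

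The crux, and the step I expect to be the main obstacle, is verifying this merge. Because $x > 0$, the values $z = x + y$ produced as $y$ runs over $\mathit{sums}$ in increasing order are themselves strictly increasing, so the inner loop enumerates $S + x$ in sorted order, while the index $i$ scans $S$ in sorted order. The inner-loop invariant I would track is that, just before each $z$ is appended, $\mathit{newSums}$ holds the sorted merge of all previously produced elements of $S + x$ together with all $\mathit{sums}[j]$ for $j < i$, every one of which is at most $z$. The \texttt{while} loop flushes exactly the remaining elements of $S$ strictly below $z$; this is justified by the preloading of $\mathit{sums}[0]$, since $x > 0$ forces $\mathit{sums}[0]$ to be strictly smaller than every $z = x + y$ (as $y \ge \mathit{sums}[0]$), so the global minimum is correctly placed first and no element of $S + x$ can precede it. Cross-duplicates, meaning an element of $S$ equal to some $s' + x \in S + x$, are eliminated by the single test $\mathit{sums}[i] = z$: because $S$ is sorted and distinct, the unique element of $S$ equal to $z$ (if any) is the first unconsumed one once all smaller elements have been flushed, so it sits exactly at position $i$ and is skipped. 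No duplicates arise within $S$ or within $S + x$ by the inductive hypothesis, so $\mathit{newSums}$ is exactly the sorted deduplication of $S \cup (S + x)$, closing the induction; the returned list is $D^+$ in sorted order.

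For the runtime I would first establish the upper bound $O(\abs{D}\abs{D^+})$. The outer loop runs $\abs{D}$ times, and in a single pass the inner loop executes $\abs{\mathit{sums}}$ times; across that pass the index $i$ only increases, from $1$ to at most $\abs{\mathit{sums}}$, so the total number of \texttt{while}-iterations is at most $\abs{\mathit{sums}}$, and with $O(1)$ work per append each pass costs $O(\abs{\mathit{sums}})$. Since the values in $\mathit{sums}$ are always submultiset sums of $D$ we have $\abs{\mathit{sums}} \le \abs{D^+}$, giving the bound. For the matching lower bound, and hence $\Theta$, I would take the family $D = \{ 1, 1, \dots, 1 \}$ with $n$ copies: then $D^+ = \{ 0, 1, \dots, n \}$, after $k$ elements $\abs{\mathit{sums}} = k + 1$, and the total cost is $\sum_{k=0}^{n-1} \Theta(k+1) = \Theta(n^2) = \Theta(\abs{D}\abs{D^+})$ since $\abs{D^+} = n + 1$. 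Once the merge argument is settled these runtime bounds are routine amortized counting.
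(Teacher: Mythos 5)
Your proof is correct and follows essentially the same route as the paper's: an induction on the outer loop with the invariant that $sums$ is the sorted, duplicate-free list of submultiset sums processed so far, a sorted-merge argument for the inner loop (including the key points that the preloaded $sums[0]$ cannot be duplicated since $x>0$ and that the single equality test at position $i$ catches exactly the cross-duplicates), the amortized bound on the index $i$ for the $O(\abs{D}\abs{D^+})$ upper bound, and the all-ones multiset for the matching lower bound. No gaps; your phrasing of the inner step as maintaining a sorted merge of $S$ and $S+x$ is just a cleaner packaging of the paper's containment-plus-ordering argument.
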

\begin{proof}
We prove that if $sums$ is a list of distinct nonnegative integers
sorted in increasing order, the body of the loop on line
\ref{outer-loop} results in $sums$ being updated to include all
integers of the form $s + x$ for $s$ in $sums$, still in increasing
order and with no duplicates. Since $sums$ is initially set to be the
list with the single element $0$ on line \ref{countsums-init}, it will
follow by induction that on line \ref{countsums-return} the list
$sums$ is $D^+$ sorted in increasing order. So the algorithm returns
$D^+$, and is correct. 

For notational convenience we will sometimes refer to $sums$ and
$newSums$ as sets. We need to show that at line
\ref{countsums-update}, $newSums$ lists the set $sums \cup (sums + x)$
in increasing order without duplicates. It is clear from lines
\ref{countsums-newsums-init}, \ref{countsums-oldsum}, and
\ref{countsums-newsum} that $newSums$ is contained in $sums \cup (sums
+ x)$, and from line \ref{countsums-newsum} that $sums + x$ is
contained in $newSums$. From the conditions on lines
\ref{countsums-scanloop} and \ref{countsums-duplicate}, we see that
while $i$ is less than $\mathsf{len}(sums)$, it is only incremented
when $sums[i]$ has been added to $newSums$, either by line
\ref{countsums-oldsum} or by line \ref{countsums-newsum} if $sums[i] =
z$. When $y$ is the last, and thus the unique largest, element of
$sums$, either $z = x + y$ is larger than every element of $sums$ or
equal to $y$ (since $x$ is nonnegative). In either case, the loop at
line \ref{countsums-scanloop} will repeat until $i =
\mathsf{len}(sums)$. Therefore, since $newSums$ starts with $sums[0]$
from line \ref{countsums-newsums-init}, at line \ref{countsums-update}
every element of $sums$ will be in $newSums$, and so $newSums$ lists
the set $sums \cup (sums + x)$. By the conditions on line
\ref{countsums-scanloop}, no value of $z$ is added to $newSums$ unless
all smaller values of $sums$ and $sums + x$ have already been added,
since $sums$ is in increasing order. Furthermore, if $z$ equals some
value in $sums$, say with index $i$, then the check on line
\ref{countsums-duplicate} ensures that $i$ is incremented so that the
value $z$ is only added once to $newSums$ (and since $z \ge x > 0$, we have $z > sums[0]$ and thus the value added to $newSums$ on line \ref{countsums-newsums-init} is not duplicated). Therefore at line
\ref{countsums-update}, $newSums$ is in increasing order and has no
duplicates, as desired. 

From our work above, we see that in every iteration of the loop on
line \ref{outer-loop} the variable $i$ is incremented until $i =
\mathsf{len}(sums)$ and no further. Therefore in every such iteration,
the loop on line \ref{countsums-scanloop} takes
$O(\mathsf{len}(sums))$ time in total for all of its iterations. Since
the body of the loop on line \ref{countsums-inner-loop} takes constant
time excluding the loop on line $\ref{countsums-scanloop}$, every
iteration of the loop on line \ref{outer-loop} takes
$O(\mathsf{len}(sums))$ time. In every iteration $\mathsf{len}(sums)$
is bounded above by $\abs{D^+}$, since $sums$ never gets shorter and
after the last iteration has length exactly $\abs{D^+}$. Since there
are exactly $\abs{D}$ iterations of the loop on line \ref{outer-loop},
the entire algorithm runs in $O(\abs{D} \abs{D^+})$ time. If $D$
consists of $n$ copies of 1, it is easy to see that $sums$ grows
linearly from length 1 to length $n+1$, so that the algorithm runs in
$\Omega(n^2) = \Omega(\abs{D} \abs{D^+})$ time. 
\end{proof}

\end{document}